\documentclass[12pt]{article}
\usepackage[margin=1in]{geometry}
\usepackage{amsmath,amssymb,amsthm}
\usepackage{lmodern}
\usepackage[T1]{fontenc}
\usepackage{setspace}
\usepackage{graphicx}
\usepackage{needspace}
\usepackage[dvipsnames]{xcolor}
\usepackage{pgfplots}
\pgfplotsset{compat=1.17}
\usetikzlibrary{calc,positioning}
\definecolor{ublue}{HTML}{2B6CB0}
\definecolor{uyellow}{HTML}{D99A1E}
\definecolor{ugreen}{HTML}{3A9A5B}
\definecolor{ured}{HTML}{C9443A}
\usepackage[font=small,labelfont=bf]{caption}
\usepackage[authoryear,round]{natbib}
\usepackage[colorlinks=true,linkcolor=blue!50!black,citecolor=blue!50!black,urlcolor=blue!50!black]{hyperref}
\hypersetup{pdftitle={The Limits of Price Discrimination When the Market Is Unknown},pdfauthor={Irfan Tekdir},pdfkeywords={Price discrimination, Market segmentation, Information design, Equal-revenue distribution, Robustness}}

\newif\ifanon \anonfalse

\usepackage{etoolbox}
\AtBeginEnvironment{lemma}{\interlinepenalty=10000\predisplaypenalty=10000\postdisplaypenalty=10000}
\AtBeginEnvironment{theorem}{\interlinepenalty=10000\predisplaypenalty=10000\postdisplaypenalty=10000}
\AtBeginEnvironment{proposition}{\interlinepenalty=10000\predisplaypenalty=10000\postdisplaypenalty=10000}
\AtBeginEnvironment{thmA}{\interlinepenalty=10000\predisplaypenalty=10000\postdisplaypenalty=10000}

\newtheorem{theorem}{Theorem}
\newtheorem{proposition}{Proposition}
\newtheorem{lemma}{Lemma}
\newtheorem*{thmA}{Theorem}
\theoremstyle{definition}
\newtheorem{example}{Example}

\newcommand{\conv}{\operatorname{conv}}

\title{The Limits of Price Discrimination\\ When the Market Is Unknown\thanks{I used Anthropic's Claude as a research assistant for editing, typesetting, writing the figure code, and checking derivations and references. All errors are my own.}}
\ifanon
\author{}
\else
\author{Irfan Tekdir\thanks{Department of Economics, The Pennsylvania State University. Email: \href{mailto:tekdirfan@psu.edu}{tekdirfan@psu.edu}.}}
\fi
\date{October 29, 2026}

\begin{document}
\maketitle

\begin{abstract}
\setstretch{1.2}
\noindent \citet{BBM2015} characterize the consumer and producer surplus pairs that segmentation can produce when the distribution of values is known. I fix only the set of possible values and characterize the pairs that some market and some segmentation can produce. They form a convex polygon. Its outer edge is the largest efficient surplus a market can have when its uniform-price profit is capped, net of that profit, and truncated equal-revenue markets attain it. At the values no segmentation is needed; strictly between two values, reaching the frontier requires splitting a market into two extremal markets. With a continuum of values up to $\bar v$, consumer surplus is at most $\pi\ln(\bar v/\pi)$ when producer surplus (the seller's profit) is $\pi$, this bound is reached without segmentation, and no market gives consumers more than $\bar v/e$.

\vspace{1em}
\noindent\textbf{Keywords:} Price discrimination, Market segmentation, Information design, Equal-revenue distribution, Robustness.\\
\noindent\textbf{JEL Codes:} D42, D82, D83, L12.
\end{abstract}

\section{Introduction}

Price discrimination runs on information about consumers, and that information is often collected and released by someone other than the seller. A platform decides which consumer characteristics its sellers can condition prices on, a data broker sells customer segments, and a privacy regulator limits what can be used at all. Each of them is choosing how finely the market is segmented, and each faces the same question: how much can information about consumers help or hurt them?

\Needspace{5\baselineskip}
For a monopolist facing a known distribution of values, \citet{BBM2015}, BBM from here on, give a complete answer. Segmentation can achieve every pair of consumer and producer surplus in which consumer surplus is nonnegative, producer surplus is at least the uniform-price profit, and total surplus is at most the efficient surplus.

Using this triangle requires knowing the distribution of values, which is the demand curve, and the parties who judge consumer data rarely know it. A regulator writing a disclosure rule, or a platform deciding what to reveal to sellers, sets one policy for many markets and usually knows at most the range of values consumers place on a good. This paper asks the corresponding robust question. Fixing only the set of possible values, which pairs of consumer and producer surplus can price discrimination produce for some market and some segmentation of it?

The answer is the union of BBM's triangles over all markets, and it is a convex polygon (Theorem~\ref{thm:polygon}, Figure~\ref{fig:polygon}). Its left edge is the producer-surplus axis, where the seller extracts all surplus. Its outer edge is $u=\bar w(\pi)-\pi$, where $\bar w(\pi)$ is the largest efficient surplus any market can have when its uniform-price profit is at most $\pi$. The market that attains $\bar w(\pi)$ has a truncated equal-revenue distribution (Lemma~\ref{lem:wbar}). To keep monopoly profit low while making the market as valuable as possible, every upper tail of the distribution has to be as thick as the profit cap allows, and then the seller is indifferent among all high prices.

Two arguments do the work. The first is a reduction (Lemma~\ref{lem:hull}): once the market may vary, the union of outcomes over all markets and segmentations equals the convex hull of the outcomes that markets produce \emph{without} segmentation. Segmentation adds nothing that mixing unsegmented markets cannot, and information is needed on the frontier only where no single market reaches it, which happens strictly between two values. The second is the profit-cap problem behind $\bar w$, which has a closed-form solution.

The vertices of the polygon sit at the values. At $\pi=v_k$ the frontier market is BBM's extremal market on $\{v_k,\dots,v_K\}$, priced uniformly at $v_k$ with no segmentation at all. Strictly between two values, no unsegmented market reaches the frontier, and reaching it requires splitting a market into two extremal markets (Proposition~\ref{prop:between}). Segmentation matters for the robust frontier only because values are discrete. With a continuum of values (Proposition~\ref{prop:continuum}), every frontier point is reached by an unsegmented equal-revenue market, the frontier becomes $u=\pi\ln(\bar v/\pi)$, and the largest consumer surplus over all markets is $\bar v/e$, reached at profit $\bar v/e$. For the linear demand $D(p)=1-p$, the best segmentation of the linear market gives consumers $\tfrac14$ at profit $\tfrac14$, while another market with the same range of values gives them $\tfrac14\ln4\approx0.35$ at the same profit.

\subsection{Related Literature}

This paper builds directly on BBM and uses their extremal markets. Their approach belongs to the information-design literature that starts with \citet{KamenicaGentzkow2011}, surveyed by \citet{BergemannMorris2019}. \citet{HaghpanahSiegel2023} study when segmentation can make everyone better off when the seller offers several products. All of these papers fix the market. The closest analogue to the question asked here is robust pricing by a seller who does not know the distribution of values \citep{BergemannSchlag2011}. The equal-revenue distribution that attains the frontier is the same family that solves the buyer-optimal learning problem of \citet{RoeslerSzentes2017}, and it also arises when the buyer chooses her value distribution before the seller sets a price \citep{CondorelliSzentes2020}. In \citet{RoeslerSzentes2017} and \citet{CondorelliSzentes2020}, as here, what is best for consumers is whatever removes the seller's reason to prefer a high price. The difference is that in those papers someone chooses the distribution, whereas here the family appears because it carries the most surplus compatible with a given uniform-price profit, and the object of interest is the whole set of surplus pairs rather than one party's optimum. \citet{Tekdir2024} studies a different departure from BBM, in which the market is known but segmentation is costly.

\Needspace{8\baselineskip}
\section{Model}\label{sec:model}

I follow BBM's model and notation. A monopolist with zero marginal cost sells a good to a unit mass of consumers, each of whom demands one unit. There are $K$ possible values $0<v_1<\dots<v_K$. A market is a distribution $x\in X=\Delta(\{v_1,\dots,v_K\})$, where $x_k$ is the share of consumers with value $v_k$. Write $S_k(x)=\sum_{j\ge k}x_j$ for the share of consumers willing to pay $v_k$, so that the uniform price $v_k$ earns $v_kS_k(x)$. Price $v_k$ is optimal for $x$ if $v_kS_k(x)\ge v_iS_i(x)$ for every $i$. For each market let
\[
\pi^*(x)=\max_kv_kS_k(x),\qquad w^*(x)=\sum_{j=1}^Kv_jx_j
\]
be its uniform-price profit and its efficient surplus.

A segmentation of $x^*$ is a distribution $\sigma$ over $X$ with finite support and $\sum_x\sigma(x)x=x^*$; each market in its support is a segment. A pricing rule assigns to each segment $x$ a distribution $\phi(x)$ over prices, and it is optimal if every price charged with positive probability in $x$ is optimal for $x$. Consumers buy whenever the price does not exceed their value. Consumer and producer surplus are
\[
u=\sum_x\sigma(x)\sum_k\phi_k(x)\sum_{j\ge k}(v_j-v_k)x_j,\qquad\pi=\sum_x\sigma(x)\sum_k\phi_k(x)\,v_kS_k(x).
\]
Let $W(x^*)$ be the set of pairs $(u,\pi)$ generated by segmentations of $x^*$ with optimal pricing rules.

\begin{thmA}[\citealp{BBM2015}, Theorem 1]
For every market $x^*$, $W(x^*)=\{(u,\pi):u\ge0,\ \pi\ge\pi^*(x^*),\ u+\pi\le w^*(x^*)\}$.
\end{thmA}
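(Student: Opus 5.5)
The plan is to prove the two inclusions separately. The inclusion $W(x^*)\subseteq$ triangle is immediate. Each consumer pays at most her value, so $u\ge0$ and $u+\pi\le w^*(x^*)$. For the profit bound, fix any uniform price $v_i$ that is optimal for $x^*$. In each segment $x$, every price charged is optimal for $x$, so it earns at least $v_iS_i(x)$. Averaging over $\sigma$ and using linearity of $S_i$ in $x$ gives $\pi\ge\sum_x\sigma(x)v_iS_i(x)=v_iS_i(x^*)=\pi^*(x^*)$.

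For the reverse inclusion, the triangle is convex and $W(x^*)$ is convex. Convexity of $W$ holds because mixing two segmentations of $x^*$ with weights $\lambda,1-\lambda$ (scaling the $\sigma$'s and keeping the pricing rules) yields a segmentation of $x^*$ whose surplus pair is the mixture. So it suffices to attain the three vertices:
\begin{itemize}
\item $(0,w^*(x^*))$, by perfect segmentation into point masses, each priced at its own value;
\item $(w^*(x^*)-\pi^*(x^*),\pi^*(x^*))$, the efficient consumer-optimal point;
\item $(0,\pi^*(x^*))$, the minimal-surplus point.
\end{itemize}

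The key construction uses extremal markets. For a nonempty support set $M\subseteq\{1,\dots,K\}$, let $x^M$ be the market supported on $M$ that makes the seller indifferent among all prices in $M$ (equal-revenue on $M$). The set $X_M$ of markets supported within $M$ for which every price in $M$ is optimal is a convex polytope whose extreme points are the $x^{M'}$ with $M'\subseteq M$. I then build a greedy segmentation of $x^*$. Let $M$ be the support of $x^*$ and put weight $\alpha$ on $x^M$, with $\alpha$ as large as possible subject to the residual $x^*-\alpha x^M$ staying nonnegative. The residual loses at least one point of support. Because the optimal price $v_{i^*}$ of $x^*$ remains optimal in the residual, induction (normalizing the residual) produces a segmentation in which every segment is extremal and has $v_{i^*}$ among its optimal prices. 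Showing that the optimal price persists through each step is the main obstacle. I would argue it by noting that subtracting an equal-revenue market lowers every price's revenue by the same amount $\alpha\cdot$(constant) on prices in $M$, and lowers it by no more off $M$, so $v_{i^*}$ stays optimal.

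Given this segmentation, the vertices follow by choosing prices. If in every segment we charge the lowest price in its support, which is optimal by indifference, every consumer buys, so total surplus equals $w^*(x^*)$. Profit is $\sum_x\sigma(x)\pi^*(x)$, and by indifference this equals $\sum_x\sigma(x)v_{i^*}S_{i^*}(x)=\pi^*(x^*)$, giving the second vertex. If instead we charge the highest price in each segment's support, each consumer either buys at her value or does not buy, so $u=0$. Profit is again $\pi^*(x^*)$ by the same indifference computation, giving the third vertex. Combined with perfect discrimination for the first vertex and convexity of $W(x^*)$, this yields the whole triangle.
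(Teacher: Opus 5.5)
Your upper bound is the paper's argument, and your attainment plan follows BBM, which the paper cites rather than reproves: perfect discrimination for $(0,w^*)$, a greedy decomposition of $x^*$ into extremal markets that all keep an optimal price $v_{i^*}$ of $x^*$, then the lowest or highest price in each segment. The gap is in the step you flag as the main obstacle, because the reason you give for $v_{i^*}$ staying optimal in $r=x^*-\alpha x^M$ points the wrong way. Every price in $M$ earns $\min M$ in $x^M$, so on $M$ revenue falls by the common amount $\alpha\min M$. Off $M$ it falls by $\alpha v_jS_j(x^M)<\alpha\min M$, that is, by \emph{less} than at $v_{i^*}$. This inequality and $v_jS_j(x^*)\le v_{i^*}S_{i^*}(x^*)$ go in opposite directions and cannot be combined. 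So nothing in your argument stops a price outside $M$ from overtaking $v_{i^*}$.

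What rules this out is a support fact you never state. For any nonzero nonnegative $y$ (a market up to scale), a price $v_j$ outside the support of $y$ is never optimal. If $v_m$ is the lowest support point above $v_j$, then $S_j=S_m$ and $v_jS_j<v_mS_m$; if there is none, $v_j$ earns zero. Since $r$ is supported inside $M$, its largest revenue is attained on $M$, where all revenues fell by the same amount. Since $v_{i^*}\in M$ by the same fact, $v_{i^*}$ is still a maximizer. The same fact also shows that $v_{i^*}$ lies in the support of every normalized residual. You need this before you can say that each extremal segment $x^{M_t}$ has $v_{i^*}$ among its optimal prices.

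It further shows that your description of $X_M$ is wrong. A market supported within $M$ in which every price of $M$ is optimal must have support exactly $M$, so $X_M=\{x^M\}$. The polytope whose extreme points are extremal markets is BBM's set of markets in which a given price $v_k$ is optimal, with no restriction on support; its extreme points are the $x^S$ with $v_k\in S$. Since you never use the claim, drop it.

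Finally, in your convexity argument, a segment that appears in both segmentations must be priced with the weighted mixture of its two pricing rules, as in the proof of Lemma~\ref{lem:hull}.
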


The theorem has two parts. The first is an upper bound: every segmentation with optimal pricing satisfies the three inequalities, so $W(x^*)$ lies inside the triangle. The second is attainment: every point of the triangle is generated by some segmentation, which BBM prove by building segmentations out of extremal markets (defined below). My main results use only the first part, which is easy to see. Consumers buy only at prices weakly below their values, so $u\ge0$, and total surplus counts the values of the consumers who buy, so $u+\pi\le w^*$. If $v_i$ is optimal for $x^*$, then $\pi^*(x)\ge v_iS_i(x)$ in every segment, and since $S_i$ is linear in the market, $\pi=\sum_x\sigma(x)\pi^*(x)\ge v_i\sum_x\sigma(x)S_i(x)=v_iS_i(x^*)=\pi^*(x^*)$.

For a nonempty set of values $S$, let $x^S$ be BBM's \emph{extremal market}: the market supported on $S$ in which the seller is indifferent among all prices in $S$. BBM show that, for $v_i\in S$,
\begin{equation}\label{eq:extremal}
x^S_i=\min S\Big(\frac1{v_i}-\frac1{\mu(v_i,S)}\Big)\ \text{ if }v_i\ne\max S,\qquad x^S_i=\frac{\min S}{\max S}\ \text{ if }v_i=\max S,
\end{equation}
where $\mu(v_i,S)$ is the next element of $S$ above $v_i$. I need~\eqref{eq:extremal} only for upper sets $S=\{v_k,\dots,v_K\}$.

\Needspace{8\baselineskip}
\section{Main Results}\label{sec:main}

\subsection{The Union Is a Convex Hull}

The object of interest is
\[
U=\bigcup_{x\in X}W(x),
\]
the set of pairs that some market produces under some segmentation. For a market $x$, let $O(x)$ be the set of pairs it produces without segmentation, as the seller randomizes over its optimal prices:
\[
O(x)=\Big\{\Big(\textstyle\sum_k\phi_ku_k(x),\ \pi^*(x)\Big):\phi\in\Delta\big(\{k:v_kS_k(x)=\pi^*(x)\}\big)\Big\},
\]
where $u_k(x)=\sum_{j\ge k}(v_j-v_k)x_j$. Every pair in $O(x)$ has producer surplus $\pi^*(x)$, so $O(x)$ is a horizontal segment on the bottom edge of $W(x)$, and a single point when the optimal price is unique.

\begin{lemma}\label{lem:hull}
$U=\conv\big(\bigcup_{x\in X}O(x)\big)$.
\end{lemma}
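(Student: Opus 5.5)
The plan is to prove the two inclusions separately. Both rest on one observation: a segmentation with an optimal pricing rule is just a weighted average of unsegmented markets, each priced optimally. Neither direction needs the attainment half of BBM's theorem.

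For $U\subseteq\conv\big(\bigcup_x O(x)\big)$, take $(u,\pi)\in W(x^*)$, generated by a segmentation $\sigma$ of $x^*$ and an optimal pricing rule $\phi$. For each segment $x$ in the (finite) support of $\sigma$, define
\[
(u_x,\pi_x)=\Big(\textstyle\sum_k\phi_k(x)u_k(x),\ \sum_k\phi_k(x)v_kS_k(x)\Big).
\]
Optimality of $\phi$ means $\phi(x)$ is supported on prices $v_k$ with $v_kS_k(x)=\pi^*(x)$. Hence $\pi_x=\pi^*(x)$ and $(u_x,\pi_x)\in O(x)$. By the definitions of $u$ and $\pi$ in Section~\ref{sec:model}, $(u,\pi)=\sum_x\sigma(x)(u_x,\pi_x)$. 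This is a finite convex combination of points of $\bigcup_x O(x)$, so it lies in the convex hull.

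For the reverse inclusion, take a point of the hull, written as a finite convex combination $\sum_{i=1}^n\lambda_i p_i$ with $p_i\in O(x_i)$. Let $\phi^i$ be a distribution over optimal prices of $x_i$ that produces $p_i$. The step is to build a single market that realizes the combination:
\begin{enumerate}
\item Set $x^*=\sum_i\lambda_i x_i\in X$.
\item Let $\sigma$ put weight $\lambda_i$ on segment $x_i$. Then $\sigma$ has finite support and mean $x^*$, so it is a segmentation of $x^*$.
\item Let the pricing rule on segment $x_i$ be $\phi^i$, which is optimal by construction.
\end{enumerate}
The resulting pair is $\sum_i\lambda_i p_i$, so it lies in $W(x^*)\subseteq U$.

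The only point needing care is that $\sigma$ must be a distribution over markets, so distinct indices with the same market $x_i=x_j$ have to be merged. On the merged segment, assign the mixture $(\lambda_i\phi^i+\lambda_j\phi^j)/(\lambda_i+\lambda_j)$. This is still supported on optimal prices of that market, and since payoffs are linear in $\phi$ it produces the same weighted contribution. With this merge, the second inclusion, and with it the lemma, goes through.

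I expect no step to be a real obstacle. The substantive content is the second inclusion: letting the market vary means any mixture of unsegmented outcomes can be realized as a segmentation of their average market. This is what makes $U$ convex even though it is defined as a union.
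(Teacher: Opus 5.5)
Your proof is correct and follows the paper's argument essentially step for step. As in the paper, you prove the first inclusion by showing that each segment contributes a point of $O(x)$ and that welfare averages across segments. You prove the second by realizing a convex combination as a segmentation of the average market $\sum_i\lambda_ix_i$, merging coincident segments by mixing their price distributions.
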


\begin{proof}
Write $C=\conv\big(\bigcup_{x\in X}O(x)\big)$ for the right-hand side.\par\nopagebreak

\emph{$U\subseteq C$.} Take any $(u,\pi)\in U$. Then $(u,\pi)\in W(\bar x)$ for some market $\bar x$, so it is generated by a segmentation of $\bar x$ with segments $x_1,\dots,x_n$, weights $\sigma_1,\dots,\sigma_n$, and an optimal pricing rule. In segment $x_i$ the rule charges only optimal prices, so the seller earns $\pi^*(x_i)$ there and the segment produces a pair $o_i\in O(x_i)$. Welfare adds up across segments, so $(u,\pi)=\sum_i\sigma_io_i$, a convex combination of points of $\bigcup_xO(x)$. Hence $(u,\pi)\in C$.

\emph{$C\subseteq U$.} Take any point of $C$. By definition of the convex hull it equals $\sum_i\lambda_io_i$ for finitely many markets $x_i$, outcomes $o_i\in O(x_i)$ and weights $\lambda_i\ge0$ summing to one. Each $o_i$ comes from pricing $x_i$ with some distribution over its optimal prices. Since $X$ is convex, $\bar x=\sum_i\lambda_ix_i$ is a market, and splitting $\bar x$ into the segments $x_i$ with weights $\lambda_i$ is a segmentation of it. Pricing each segment as in $o_i$ is an optimal pricing rule, and it generates $\sum_i\lambda_io_i$. If two segments coincide, merge them and price with the weighted mixture of their two price distributions, which is still optimal and, by linearity, gives the same welfare. Hence the point lies in $W(\bar x)\subseteq U$.
\end{proof}

Once the market itself may vary, nothing is lost by looking only at unsegmented outcomes and mixing them. Information is needed to reach a frontier point only if no single market produces it.

\Needspace{8\baselineskip}
\subsection{The Largest Surplus under a Profit Cap}

By Lemma~\ref{lem:hull}, $U$ is the convex hull of the unsegmented outcomes, so it is enough to know which unsegmented outcomes exist. In an unsegmented market the seller earns its uniform-price profit, and consumers get at most what is left of the efficient surplus. This gives an upper bound once we know how much efficient surplus a market can contain when its uniform-price profit is capped. The following function measures exactly that. For $\pi\in[v_1,v_K]$ let
\[
\bar w(\pi)=\max\{w^*(x):x\in X,\ \pi^*(x)\le\pi\},
\]
the largest efficient surplus a market can have when the seller can earn at most $\pi$ with a uniform price. Restricting to $\pi\ge v_1$ loses nothing, since the price $v_1$ sells to every consumer and so $\pi^*(x)\ge v_1$ in every market.

\begin{lemma}\label{lem:wbar}
Let $v_0=0$. For every $\pi\in[v_1,v_K]$,
\[
\bar w(\pi)=\sum_{k=1}^K(v_k-v_{k-1})\min\Big(1,\frac{\pi}{v_k}\Big).
\]
The maximum is attained by the truncated equal-revenue market $x^\pi$ defined by $S_k(x^\pi)=\min(1,\pi/v_k)$. In $x^\pi$, every price $v_k\ge\pi$ earns exactly $\pi$ and every lower price earns less, so $\pi^*(x^\pi)=\pi$. The function $\bar w$ is nondecreasing, concave and piecewise linear, with kinks only at values.
\end{lemma}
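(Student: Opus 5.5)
The plan is to rewrite efficient surplus through the survival function. Summation by parts (Abel) with $v_0=0$ and $S_1(x)=1$ gives
\[
w^*(x)=\sum_{k=1}^K v_kx_k=\sum_{k=1}^K(v_k-v_{k-1})S_k(x),
\]
and this identity drives the whole argument. The upper bound is then immediate. If $\pi^*(x)\le\pi$, every price satisfies $v_kS_k(x)\le\pi$, so $S_k(x)\le\pi/v_k$. Since also $S_k(x)\le1$, we get $S_k(x)\le\min(1,\pi/v_k)$ for each $k$. Each coefficient $v_k-v_{k-1}$ is positive, so $w^*(x)$ is at most the claimed sum.

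For attainment, I will show that $x^\pi$ is a well-defined market. Define $s_k=\min(1,\pi/v_k)$ and set $s_{K+1}=0$.
\begin{itemize}
\item Because $\pi\ge v_1$, we have $s_1=1$.
\item The sequence $s_k$ is nonincreasing in $k$, since $v_k$ is increasing.
\item The last term is nonnegative, since $s_K=\pi/v_K\ge0$ when $\pi\le v_K$.
\end{itemize}
Hence $x_k=s_k-s_{k+1}\ge0$ and these shares sum to $s_1=1$, so $x^\pi\in X$ with $S_k(x^\pi)=s_k$.

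Its profits are $v_ks_k=\min(v_k,\pi)$. This equals $\pi$ exactly when $v_k\ge\pi$, and equals $v_k<\pi$ otherwise. Some $v_k\ge\pi$ exists because $\pi\le v_K$, so $\pi^*(x^\pi)=\pi$. Finally, the identity gives $w^*(x^\pi)=\bar w(\pi)$, which shows the maximum is attained.

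For the shape of $\bar w$, note that each summand $\pi\mapsto(v_k-v_{k-1})\min(1,\pi/v_k)$ is a minimum of two affine functions. It is therefore concave, nondecreasing, and piecewise linear, with its only kink at $\pi=v_k$. A positive combination of such functions preserves all three properties, and its kinks can occur only at the values $v_k$. Monotonicity of $\bar w$ also follows directly from the definition, because the feasible set grows with $\pi$.

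The only step needing care is verifying that $x^\pi$ is a genuine market, which uses both endpoint restrictions $v_1\le\pi\le v_K$. Everything else is the summation-by-parts identity combined with the pointwise bound on $S_k$.
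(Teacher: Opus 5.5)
Your proof is correct and follows the paper's argument essentially step for step: the summation-by-parts identity $w^*(x)=\sum_k(v_k-v_{k-1})S_k(x)$, the pointwise cap $S_k\le\min(1,\pi/v_k)$ combined with positive coefficients, the check that these caps form a valid tail sequence, and concavity and piecewise linearity of $\bar w$ as a positive combination of the functions $\min(1,\pi/v_k)$. One small quibble with your closing remark: $x^\pi$ is a genuine market using only $\pi\ge v_1$, while $\pi\le v_K$ is needed for $\pi^*(x^\pi)=\pi$, which is where you actually use it in the body of your proof.
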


\begin{proof}
Since $x_j=S_j-S_{j+1}$ with $S_{K+1}=0$, summation by parts gives $w^*(x)=\sum_k(v_k-v_{k-1})S_k(x)$. Markets correspond one to one with tail sequences $1=S_1\ge S_2\ge\dots\ge S_K\ge0$. The cap $\pi^*(x)\le\pi$ says $v_kS_k\le\pi$ for every $k$, and together with $S_k\le1$ it reads $S_k\le\bar S_k=\min(1,\pi/v_k)$. The requirement $S_1=1$ is compatible with the cap because $\pi\ge v_1$. The objective increases in every $S_k$, since $v_k-v_{k-1}>0$, so it is maximized by setting $S_k=\bar S_k$ for all $k$, provided these bounds form a tail sequence. They do: $\bar S_1=1$, and $\bar S_k$ is nonincreasing because $v_k$ increases. In $x^\pi$, price $v_k$ earns $v_k\bar S_k=\min(v_k,\pi)$, which equals $\pi$ when $v_k\ge\pi$ and is smaller otherwise. Finally, $\bar w$ is nondecreasing because the feasible set grows with $\pi$, and it is a positive combination of the functions $\pi\mapsto\min(1,\pi/v_k)$, each concave and piecewise linear with a single kink at $v_k$.
\end{proof}

The name describes the tail of $x^\pi$. For every value $v_k\ge\pi$, the share of consumers willing to pay $v_k$ is $\pi/v_k$, so the price $v_k$ earns exactly $\pi$: revenue is the same at all these prices, hence ``equal revenue.'' For a value $v_k<\pi$, the formula $\pi/v_k$ would exceed one, and a share cannot, so the tail is cut off at one there; this is the ``truncation.'' A tail of one at $v_k$ means that every consumer is willing to pay $v_k$. So in $x^\pi$ no consumer has a value below the largest value that does not exceed $\pi$, each lower price sells to everyone and earns less than $\pi$, and the seller is indifferent among all prices from $\pi$ up. When $\pi$ is itself one of the values, $x^\pi$ is one of BBM's extremal markets.

\begin{lemma}\label{lem:extremal}
For each $k$, the market $x^{v_k}$ equals BBM's extremal market $x^S$ with $S=\{v_k,\dots,v_K\}$. In it every price in $S$ is optimal, and the price $v_k$ sells to every consumer.
\end{lemma}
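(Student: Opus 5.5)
The plan is a direct computation: write down the tail shares of $x^{v_k}$ from Lemma~\ref{lem:wbar}, write down the tail shares of $x^S$ from~\eqref{eq:extremal}, and check that they coincide. Since a market is determined by its tail sequence $(S_1,\dots,S_K)$, equal tails give equal markets, and the optimality claims then follow from what Lemma~\ref{lem:wbar} already established for $x^\pi$.

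\textbf{Step 1: the tails of $x^{v_k}$.} Set $\pi=v_k$ in Lemma~\ref{lem:wbar}. For $j\le k$ we have $v_j\le v_k$, so $S_j(x^{v_k})=\min(1,v_k/v_j)=1$. For $j\ge k$ we have $S_j(x^{v_k})=v_k/v_j$. The two formulas agree at $j=k$. In particular $S_k=1$, so no consumer has a value below $v_k$, the support lies in $S=\{v_k,\dots,v_K\}$, and the price $v_k$ sells to every consumer.

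\textbf{Step 2: the tails of $x^S$.} I will compute these from~\eqref{eq:extremal} by telescoping. For $v_i\in S$ with $i<K$, the next element is $\mu(v_i,S)=v_{i+1}$, so $x^S_i=v_k(1/v_i-1/v_{i+1})$. Also $x^S_K=v_k/v_K$. Summing from $j=i$ to $K$ gives
\[
S_i(x^S)=v_k\Big(\frac1{v_i}-\frac1{v_K}\Big)+\frac{v_k}{v_K}=\frac{v_k}{v_i}\quad\text{for } i\ge k.
\]
For $i\le k$, every consumer has value at least $v_k\ge v_i$, so $S_i(x^S)=1$. These are exactly the tails from Step 1, so $x^{v_k}=x^S$.

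\textbf{Step 3: optimality.} Lemma~\ref{lem:wbar} says that in $x^\pi$ every price $v_j\ge\pi$ earns exactly $\pi$ and every lower price earns less. With $\pi=v_k$, every price in $S$ is therefore optimal, earning $v_k$. The same conclusion can be read off directly, since $v_jS_j=v_j\cdot v_k/v_j=v_k$ for every $j\ge k$.

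There is no real obstacle. The only step needing care is the telescoping sum in Step 2, in particular the boundary term at $\max S=v_K$, where~\eqref{eq:extremal} uses a separate formula. If one prefers to avoid the explicit formula, there is an alternative route: BBM's extremal market is characterized as the market supported on $S$ in which every price in Step 1's set $S$ is optimal. Any such market has $v_jS_j$ constant on $S$ with $S_k=1$, so $S_j=v_k/v_j$, which forces the same tails.
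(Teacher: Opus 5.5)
Your proposal is correct and is essentially the paper's proof: both match $x^{v_k}$ against formula~\eqref{eq:extremal} and then read the optimality of every price in $S$ off Lemma~\ref{lem:wbar}. The only difference is direction: the paper differences the tails of $x^{v_k}$ to recover the masses $x_i=v_k(1/v_i-1/v_{i+1})$ and $x_K=v_k/v_K$, while you telescope~\eqref{eq:extremal} to recover the tails; your alternative via the indifference characterization is also valid and avoids the explicit formula.
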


\begin{proof}
For $\pi=v_k$ the tails are $S_i=1$ for $i\le k$ and $S_i=v_k/v_i$ for $i\ge k$. Hence $x_i=0$ for $i<k$, $x_i=v_k(1/v_i-1/v_{i+1})$ for $k\le i<K$, and $x_K=v_k/v_K$. With $\min S=v_k$, $\max S=v_K$ and $\mu(v_i,S)=v_{i+1}$, this is~\eqref{eq:extremal}. By Lemma~\ref{lem:wbar}, every price $v_i\ge v_k$ earns $v_k$ and every lower price earns less. No consumer has a value below $v_k$, so $v_k$ sells to everyone.
\end{proof}

\subsection{The Polygon}

\begin{theorem}\label{thm:polygon}
The union is the convex polygon
\[
U=\{(u,\pi):v_1\le\pi\le v_K,\ 0\le u\le\bar w(\pi)-\pi\}
\]
with vertices $(0,v_1)$ and $(\hat u_k,v_k)$, $k=1,\dots,K$, where
\[
\hat u_k=\bar w(v_k)-v_k=v_k\sum_{j=k+1}^K\Big(1-\frac{v_{j-1}}{v_j}\Big),\qquad\hat u_K=0.
\]
\end{theorem}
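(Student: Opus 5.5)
By Lemma~\ref{lem:hull} it suffices to show that $\conv\big(\bigcup_xO(x)\big)$ equals the region $P=\{(u,\pi):v_1\le\pi\le v_K,\ 0\le u\le\bar w(\pi)-\pi\}$. I will prove the two inclusions separately and then read off the vertices.

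\emph{Inclusion $\conv(\bigcup_x O(x))\subseteq P$.} First I check that $P$ is convex. Since $\bar w$ is concave by Lemma~\ref{lem:wbar}, the function $\pi\mapsto\bar w(\pi)-\pi$ is concave. So $P$ is the region between the horizontal line $u=0$ and a concave function over an interval, which is convex. It therefore suffices to show $O(x)\subseteq P$ for every market $x$. A point of $O(x)$ has $\pi=\pi^*(x)\in[v_1,v_K]$, since price $v_1$ earns at least $v_1$ and every price earns at most $v_K$. It has $u\ge0$. It also satisfies $u+\pi\le w^*(x)\le\bar w(\pi^*(x))$, where the first inequality is BBM's upper bound and the second is the definition of $\bar w$, because $x$ itself is feasible with cap $\pi^*(x)$.

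\emph{Inclusion $P\subseteq\conv(\bigcup_x O(x))$.} Since $\bar w-\pi$ is piecewise linear with kinks only at values, $P$ is a polygon, and it is the convex hull of the points $(0,v_1)$, $(0,v_K)$ and $(\bar w(v_k)-v_k,v_k)$ for $k=1,\dots,K$. So I only need these points to lie in $\bigcup_xO(x)$.
\begin{itemize}
\item For $(\hat u_k,v_k)$: take $x^{v_k}$. By Lemma~\ref{lem:extremal}, $v_k$ is an optimal price and sells to everyone, so pricing at $v_k$ gives $\pi=v_k$ and total surplus $w^*(x^{v_k})=\bar w(v_k)$. The consumer surplus is then $\bar w(v_k)-v_k$.
\item For $(0,v_k)$: in the same market, price at $v_K$, which is also optimal. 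Then $u=0$ and $\pi=v_k$. This covers $(0,v_1)$ and $(0,v_K)$. The point $(0,v_K)$ also coincides with $(\hat u_K,v_K)$, since $\bar w(v_K)=v_K$, giving $\hat u_K=0$.
\end{itemize}
Every point of $P$ with $\pi\in[v_k,v_{k+1}]$ lies in the quadrilateral spanned by $(0,v_k)$, $(0,v_{k+1})$, $(\hat u_k,v_k)$ and $(\hat u_{k+1},v_{k+1})$. This is because the upper boundary is linear on that interval. Hence the inclusion holds.

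\emph{Vertices and the formula.} Evaluating Lemma~\ref{lem:wbar} at $\pi=v_k$ gives
\[
\bar w(v_k)=\sum_{j\le k}(v_j-v_{j-1})+\sum_{j>k}(v_j-v_{j-1})\frac{v_k}{v_j}.
\]
The first sum telescopes to $v_k$, so $\hat u_k=v_k\sum_{j>k}(1-v_{j-1}/v_j)$.

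The main obstacle is justifying that the listed points are exactly the vertices, and not merely that their hull is $P$. The points $(0,v_k)$ for $1<k<K$ lie on the edge $u=0$, so they are not vertices. I also need each $(\hat u_k,v_k)$ to be a genuine corner, not collinear with its neighbours. I would handle this by computing the slopes of $\bar w-\pi$: on $[v_{k-1},v_k]$ the slope is $\sum_{j\ge k}(v_j-v_{j-1})/v_j-1$. Consecutive slopes differ by $(v_k-v_{k-1})/v_k>0$, so the slope strictly decreases across each $v_k$ and each $(\hat u_k,v_k)$ is a true vertex. At $k=1$ the vertex $(\hat u_1,v_1)$ may coincide with $(0,v_1)$ only if $\hat u_1=0$, which happens only when $K=1$. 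That degenerate case is trivial.
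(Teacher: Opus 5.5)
Your proof is correct and follows the paper's structure: Lemma~\ref{lem:hull}, the extremal markets $x^{v_k}$ for the vertices, and the same telescoping computation of $\hat u_k$. The differences are minor. For $U\subseteq P$, the paper bounds segmented outcomes directly using $\pi\ge\pi^*(x)$ and the monotonicity of $\bar w$, instead of using convexity of $P$ and bounding only unsegmented outcomes. It obtains $(0,v_1)$ from the point-mass market rather than by pricing $x^{v_1}$ at $v_K$, and it does not carry out your strict-kink check that the listed points are genuine vertices.
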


The vertex $(\hat u_k,v_k)$ is the outcome of the extremal market $x^{\{v_k,\dots,v_K\}}$ under uniform pricing at $v_k$, and $(0,v_1)$ is the outcome of the market in which every consumer has value $v_1$.

\begin{proof}
Write $P$ for the polygon on the right-hand side.\par\nopagebreak

\emph{$U\subseteq P$.} Let $(u,\pi)\in W(x)$. By the bounds of Section~\ref{sec:model}, $u\ge0$ and $\pi\ge\pi^*(x)\ge v_1$, and $\pi\le u+\pi\le w^*(x)\le v_K$. The market $x$ itself satisfies the cap at level $\pi^*(x)$, so $w^*(x)\le\bar w(\pi^*(x))$. Since $\bar w$ is nondecreasing and $\pi^*(x)\le\pi$, we get $u+\pi\le w^*(x)\le\bar w(\pi^*(x))\le\bar w(\pi)$.

\Needspace{11\baselineskip}
\emph{The shape of $P$.} By Lemma~\ref{lem:wbar}, $\pi\mapsto\bar w(\pi)-\pi$ is concave and piecewise linear with kinks only at values, and it is nonnegative because $\bar w(\pi)\ge\pi$. The region between the $\pi$-axis and its graph is therefore the convex hull of $(0,v_1)$, $(0,v_K)$ and the points $(\bar w(v_k)-v_k,v_k)$. The formula for $\hat u_k$ follows from
\[
\bar w(v_k)=\sum_{j\le k}(v_j-v_{j-1})+\sum_{j>k}(v_j-v_{j-1})\frac{v_k}{v_j}=v_k\Big(1+\sum_{j>k}\Big(1-\frac{v_{j-1}}{v_j}\Big)\Big),
\]
and $(\hat u_K,v_K)=(0,v_K)$.

\Needspace{6\baselineskip}
\emph{$P\subseteq U$.} By Lemma~\ref{lem:extremal}, price $v_k$ is optimal in $x^{v_k}$ and sells to everyone, so it yields producer surplus $v_k$ and consumer surplus $w^*(x^{v_k})-v_k=\bar w(v_k)-v_k=\hat u_k$. Hence $(\hat u_k,v_k)\in O(x^{v_k})$. In the market where every consumer has value $v_1$, the only optimal price is $v_1$, which yields $(0,v_1)$. All the points that span $P$ therefore belong to $\bigcup_xO(x)$, and $P\subseteq U$ by Lemma~\ref{lem:hull}.
\end{proof}

\Needspace{6\baselineskip}
With the second part of BBM's theorem, attainment, the union can also be read as the set swept out by the bottom edges of the triangles $W(x^\pi)$: the bottom edge of $W(x^\pi)$ is the slice of $U$ at height $\pi$, so every point of the union is reached by some segmentation of $x^\pi$ that leaves the seller exactly its uniform-price profit. Between two values, points on the frontier are mixtures of the two adjacent vertices, so reaching them requires segmentation (Proposition~\ref{prop:between}).

\begin{figure}[!htb]
\centering
\begin{tikzpicture}
\begin{axis}[
  name=pa, width=8.2cm, height=6.6cm,
  xmin=-0.04, xmax=1.12, ymin=0.7, ymax=3.15,
  xlabel={Consumer surplus $u$}, ylabel={Producer surplus $\pi$},
  title={(a) Union over markets, $v=(1,2,3)$}, title style={font=\small, name=titleA},
  tick label style={font=\small},
  legend columns=3,
  legend style={at={(1.02,-0.25)}, anchor=north, draw=none, font=\footnotesize,
                /tikz/every even column/.append style={column sep=8pt}},
  legend cell align=left,
]
\addplot[area legend, fill=gray!22, draw=none] coordinates {(0,1) (0.83333,1) (0.66667,2) (0,3)} -- cycle;
\addlegendentry{Union $U$ over all markets}
\addplot[black, thick] coordinates {(0.83333,1) (0.66667,2) (0,3)};
\addlegendentry{Frontier $u=\bar w(\pi)-\pi$}
\addplot[uyellow, dashed, thick] coordinates {(0,1) (0,1.83333) (0.83333,1) (0,1)};
\addlegendentry{$W(x^S)$, $S=\{1,2,3\}$}
\addplot[ugreen, dashed, thick] coordinates {(0,2) (0,2.66667) (0.66667,2) (0,2)};
\addlegendentry{$W(x^S)$, $S=\{2,3\}$}
\addplot[area legend, fill=ublue, fill opacity=0.22, draw=ublue, draw opacity=0.6]
  coordinates {(0,1.33333) (0,2) (0.66667,1.33333)} -- cycle;
\addlegendentry{$W(x^*)$, $x^*=(\frac13,\frac13,\frac13)$}
\addlegendimage{black, line width=1.4pt}
\addlegendentry{$x^{\pi}$, $\pi\in[v_1,v_3]$}
\addplot[only marks, mark=*, mark size=1.5pt, black, forget plot]
  coordinates {(0,1) (0.83333,1) (0.66667,2) (0,3)};
\node[anchor=west, font=\footnotesize] at (axis cs:0.02,3.0) {$(0,v_3)$};
\node[anchor=west, font=\footnotesize] at (axis cs:0.68,2.02) {$(\hat u_2,v_2)$};
\node[anchor=west, font=\footnotesize] at (axis cs:0.847,1.02) {$(\hat u_1,v_1)$};
\node[anchor=north west, font=\footnotesize] at (axis cs:0.013,0.985) {$(0,v_1)$};
\addplot[only marks, mark=*, mark size=1.6pt, ublue, forget plot] coordinates {(0.33333,1.33333)};
\node[ublue, font=\footnotesize] at (axis cs:0.33,1.18) {uniform pricing};
\end{axis}
\begin{scope}[shift={($(pa.south east)+(1.2cm,0)$)}, x=5.1cm, y=5.1cm]
  \coordinate (bmid) at (0.5,0);
  \node[font=\small, anchor=base] at (bmid |- titleA.base) {(b) Markets attaining the frontier};
  \fill[uyellow, opacity=0.3] (0,0) -- (0.5,0) -- (0.3333,0.2887) -- (0.1667,0.2887) -- cycle;
  \fill[ugreen,  opacity=0.3] (0.5,0) -- (1,0) -- (0.6667,0.5774) -- (0.3333,0.2887) -- cycle;
  \fill[ured,    opacity=0.3] (0.6667,0.5774) -- (0.5,0.8660) -- (0.1667,0.2887) -- (0.3333,0.2887) -- cycle;
  \draw[gray, thin] (0.5,0) -- (0.3333,0.2887) -- (0.1667,0.2887);
  \draw[gray, thin] (0.3333,0.2887) -- (0.6667,0.5774);
  \draw[black] (0,0) -- (1,0) -- (0.5,0.8660) -- cycle;
  \node at (0.24,0.12) {$X_1$};
  \node at (0.68,0.2) {$X_2$};
  \node at (0.42,0.55) {$X_3$};
  \draw[black, line width=1.4pt] (0.3333,0.2887) -- (0.6667,0.5774) -- (0.5,0.8660);
  \fill (0.3333,0.2887) circle (1.5pt) (0.6667,0.5774) circle (1.5pt) (0.5,0.8660) circle (1.5pt);
  \draw (0.3333,0.2887) -- (0.13,0.40) node[anchor=east, font=\footnotesize] {$x^{\{1,2,3\}}$};
  \node[anchor=west, font=\footnotesize] at (0.68,0.59) {$x^{\{2,3\}}$};
  \fill[ublue] (0.5,0.2887) circle (1.6pt);
  \node[ublue, anchor=west, font=\footnotesize] at (0.52,0.28) {$x^*$};
  \node[gray, anchor=north east, font=\footnotesize] at (0,0) {$x^{\{1\}}$};
  \node[gray, anchor=north west, font=\footnotesize] at (1,0) {$x^{\{2\}}$};
  \node[gray, anchor=south, font=\footnotesize] at (0.5,0.87) {$x^{\{3\}}$};
\end{scope}
\end{tikzpicture}
\caption{Values $v=(1,2,3)$. (a) The union $U$ (grey) has vertices $(0,1)$, $(\hat u_1,1)=(\frac56,1)$, $(\hat u_2,2)=(\frac23,2)$ and $(0,3)$. Dashed: BBM's triangles for the extremal markets on $\{1,2,3\}$ and $\{2,3\}$, whose consumer-optimal corners are vertices of $U$. Blue: the triangle of $x^*=(\frac13,\frac13,\frac13)$, with uniform pricing marked. (b) The simplex of markets with BBM's regions $X_k$, in which price $v_k$ is optimal. The black path is the family of truncated equal-revenue markets $x^\pi$, $\pi\in[1,3]$, which passes through the extremal markets that generate the vertices.}
\label{fig:polygon}
\end{figure}

\Needspace{6\baselineskip}
\begin{example}\label{ex:123}
Figure~\ref{fig:polygon}(a) draws the union for BBM's leading example, with values $1$, $2$ and $3$. Here $\hat u_1=1\cdot(\tfrac12+\tfrac13)=\tfrac56$ and $\hat u_2=2\cdot\tfrac13=\tfrac23$. The uniform market $x^*=(\tfrac13,\tfrac13,\tfrac13)$ has $\pi^*=\tfrac43$ and $w^*=2$, so its consumer-optimal outcome gives consumers $\tfrac23$. The frontier at the same profit is $\bar w(\tfrac43)-\tfrac43=\tfrac79$. Some other market with the same three values therefore lets consumers gain $\tfrac19$ more while the seller earns the same.
\end{example}

\Needspace{8\baselineskip}
\subsection{Where Segmentation Is Needed}

Every vertex of the union comes from an unsegmented market. The next result shows that this is special to the vertices.

\begin{proposition}\label{prop:between}
Let $\pi\in[v_1,v_K]$. The frontier point $(\bar w(\pi)-\pi,\pi)$ is generated by some market without segmentation if and only if $\pi$ is a value. If $\pi=\lambda v_k+(1-\lambda)v_{k+1}$ with $\lambda\in(0,1)$, the point is generated by the market $x^\pi=\lambda x^{v_k}+(1-\lambda)x^{v_{k+1}}$, split into these two components and each priced uniformly at its lowest value.
\end{proposition}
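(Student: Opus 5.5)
The argument has three parts: the "if" direction, the "only if" direction, and the explicit segmentation between two values.

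\emph{If $\pi=v_k$.} By Lemma~\ref{lem:extremal}, the market $x^{v_k}$ priced uniformly at $v_k$ sells to everyone. It gives the seller $v_k$ and consumers $\bar w(v_k)-v_k$. So the frontier point is in $O(x^{v_k})$, as already noted in the proof of Theorem~\ref{thm:polygon}.

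\emph{Only if.} Suppose an unsegmented market $x$ with some optimal price mixture produces $(\bar w(\pi)-\pi,\pi)$. Its producer surplus is $\pi^*(x)=\pi$. Its total surplus satisfies
\[
\sum_k\phi_k\big(u_k(x)+v_kS_k(x)\big)\le w^*(x)\le\bar w(\pi),
\]
and this total equals $\bar w(\pi)$. Both inequalities must therefore bind.

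From $w^*(x)=\bar w(\pi)$ and the proof of Lemma~\ref{lem:wbar}, the objective is strictly increasing in each $S_k$ with $k\ge2$. The maximizer is therefore unique, so $x=x^\pi$. For $k=1$ we have $S_1=1$ anyway.

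The first inequality binding means every price charged sells to all consumers with positive mass, i.e.\ no trade is lost. So each price charged with positive probability must be at most the lowest value in the support of $x^\pi$. Suppose $v_k<\pi<v_{k+1}$. In $x^\pi$, $S_k=1$ and $S_{k+1}=\pi/v_{k+1}<1$, so the lowest value in the support is $v_k$, which carries mass $1-\pi/v_{k+1}>0$. But price $v_k$ earns $v_k<\pi$, so it is not optimal. Every optimal price is at least $v_{k+1}$, and it excludes the consumers with value $v_k$. The loss is strict, which is a contradiction. This step needs the most care: I must check that mass at $v_k$ is positive and that uniqueness of the maximizer holds, including the edge case where $\pi$ lies in the first interval.

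\emph{Segmentation between values.} The map $\pi\mapsto S_i(x^\pi)=\min(1,\pi/v_i)$ is affine on $[v_k,v_{k+1}]$ for every $i$, since no kink $v_i$ lies strictly inside. Hence $x^\pi=\lambda x^{v_k}+(1-\lambda)x^{v_{k+1}}$. Split $x^\pi$ into these two segments and price each uniformly at its lowest value; this is optimal by Lemma~\ref{lem:extremal}. The outcome is $\lambda(\hat u_k,v_k)+(1-\lambda)(\hat u_{k+1},v_{k+1})$. Since $\bar w(\pi)-\pi$ is affine on $[v_k,v_{k+1}]$ (Lemma~\ref{lem:wbar}), this equals $(\bar w(\pi)-\pi,\pi)$.
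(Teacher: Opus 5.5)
Your proof is correct. The ``if'' direction and the two-segment construction match the paper's; the ``only if'' direction takes a longer route.

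You use $w^*(x)=\bar w(\pi)$ together with uniqueness of the maximizer in Lemma~\ref{lem:wbar} to identify $x=x^\pi$. You then verify in that particular market that the mass $1-\pi/v_{k+1}$ at $v_k$ is positive and that every optimal price is at least $v_{k+1}$.

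The paper never identifies the market. It uses only the binding first inequality, $u+\pi=w^*(x)$. If $v_k$ is charged with positive probability and no trade is lost, then no consumer has a value below $v_k$, so $S_k(x)=1$. Since $v_k$ is optimal, $\pi=v_kS_k(x)=v_k$. That one line works for every market at once, so the uniqueness and positive-mass checks you flagged as delicate are not needed.

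Your version does give a by-product: $x^\pi$ is the only market with $\pi^*(x)\le\pi$ and $w^*(x)=\bar w(\pi)$, so it is the only candidate for an unsegmented frontier market at level $\pi$.
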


\Needspace{8\baselineskip}
\begin{proof}
If $\pi=v_k$, Theorem~\ref{thm:polygon} gives the market. Conversely, suppose a market $x$ generates the frontier point without segmentation, under some distribution over its optimal prices. Every optimal price earns $\pi^*(x)$, so $\pi=\pi^*(x)$, and $\bar w(\pi)=u+\pi\le w^*(x)\le\bar w(\pi^*(x))=\bar w(\pi)$. Hence $u+\pi=w^*(x)$: every consumer buys whatever price is drawn. If price $v_k$ is charged with positive probability, no consumer can have a value below $v_k$, so $S_k(x)=1$ and $\pi=v_kS_k(x)=v_k$. Hence $\pi$ is a value.

For the construction, the tails $S_i(x^\pi)=\min(1,\pi/v_i)$ are linear in $\pi$ on $[v_k,v_{k+1}]$: they equal one for $i\le k$ and $\pi/v_i$ for $i>k$. Hence $x^\pi=\lambda x^{v_k}+(1-\lambda)x^{v_{k+1}}$. Since $\bar w(\pi)-\pi$ is also linear on $[v_k,v_{k+1}]$, $(\bar w(\pi)-\pi,\pi)=\lambda(\hat u_k,v_k)+(1-\lambda)(\hat u_{k+1},v_{k+1})$. Splitting $x^\pi$ into $x^{v_k}$ with weight $\lambda$ and $x^{v_{k+1}}$ with weight $1-\lambda$, and pricing each at its lowest value, generates this average by Theorem~\ref{thm:polygon}.
\end{proof}

Between two values, the frontier market $x^\pi$ would be priced inefficiently if it were left whole. Segmentation serves its low-value consumers at a low price and its high-value consumers at a higher one, so everyone buys. The information involved is coarse: the seller learns only which of two extremal markets a consumer belongs to.

\Needspace{16\baselineskip}
\section{A Continuum of Values}

Suppose now that values lie in an interval $[\underline v,\bar v]$ with $0\le\underline v<\bar v$, and that a market is any probability distribution $F$ on this interval. I describe a market by its tail
\[
S(t)=\Pr(v\ge t),\qquad v\sim F,
\]
the share of consumers whose value is at least $t$. The function $S$ is nonincreasing and left-continuous, with $S(t)=1$ for $t\le\underline v$ and $S(t)=0$ for $t>\bar v$. A uniform price $t$ sells to the share $S(t)$ and earns $tS(t)$. This revenue is upper semicontinuous, so it attains its maximum $\pi^*(F)$ on $[\underline v,\bar v]$.

Efficient surplus is the average value, and it equals the area under the tail:
\[
w^*(F)=\int v\,dF(v)=\int_0^{\bar v}S(t)\,dt.
\]
This is the familiar fact that the mean of a nonnegative random variable is the integral of its tail, or, in demand terms, that efficient surplus is the area under the demand curve. To see it, integrate over the region $\{(v,t):0\le t\le v\le\bar v\}$ with weight $dF(v)$ in $v$ and length $dt$ in $t$, and change the order of integration (Tonelli). Integrating first over $t$ gives the length $v$, hence $\int v\,dF(v)$; integrating first over $v$ gives the share $S(t)$ of consumers with $v\ge t$, hence $\int_0^{\bar v}S(t)\,dt$.\footnote{For example, with values $1,2,3$ in equal shares, integrating first over $t$ gives $(1+2+3)/3=2$, and integrating first over $v$ gives $1+\tfrac23+\tfrac13=2$, since the tail is $1,\tfrac23,\tfrac13$ on the three unit intervals.} This is the continuous version of the formula $w^*(x)=\sum_k(v_k-v_{k-1})S_k(x)$ used in Lemma~\ref{lem:wbar}.

Segmentations are defined as before: a segmentation of $F$ is a finite collection of markets $F_1,\dots,F_n$ with weights $\sigma_i$ such that $\sum_i\sigma_iF_i=F$, and an optimal pricing rule charges each segment a distribution over its optimal prices, which exist because $tS(t)$ is upper semicontinuous. $O(F)$ is the set of outcomes of pricing $F$ itself in this way, and $W(F)$ and $U$ are defined as before. The three bounds of Section~\ref{sec:model} and Lemma~\ref{lem:hull} hold by the same arguments, since these use only that $S$ and the mean value are linear in the market and that the set of markets is convex. None of the arguments below uses the attainment part of BBM's theorem, whose extension to continuous values requires a limiting argument.

The argument follows the discrete case. A cap $\pi$ on profit is a cap on the tail: if no price earns more than $\pi$, then $tS(t)\le\pi$, so $S(t)\le\pi/t$ at every $t$. Since also $S(t)\le1$, the tail lies below the curve $\min(1,\pi/t)$ at every level $t$. Efficient surplus is the area under the tail, so it is at most the area under this curve on $[0,\bar v]$, which equals $\pi+\pi\ln(\bar v/\pi)$ (computed below). The bound is attained exactly when the tail equals the curve on $[0,\bar v]$, that is, by the equal-revenue market
\[
S^\pi(t)=\begin{cases}1,& t\le\pi,\\ \pi/t,& \pi<t\le\bar v,\\ 0,& t>\bar v.\end{cases}
\]
\Needspace{7\baselineskip}\noindent No consumer in it has a value below $\pi$. Values above $\pi$ have density $\pi/t^2$ on $(\pi,\bar v)$, and the remaining share $\pi/\bar v$ sits at $\bar v$; the two parts add up to $(1-\pi/\bar v)+\pi/\bar v=1$. Every price $t\in[\pi,\bar v]$ earns $t\cdot\pi/t=\pi$, and a price $t<\pi$ earns only $t$. The seller is therefore indifferent among all prices from $\pi$ to $\bar v$. Its efficient surplus is
\[
\int_0^{\bar v}\min\Big(1,\frac{\pi}{t}\Big)dt=\int_0^{\pi}dt+\int_\pi^{\bar v}\frac{\pi}{t}\,dt=\pi+\pi\ln\frac{\bar v}{\pi}.
\]
Subtracting the profit $\pi$ leaves
\[
g(\pi)=\pi\ln\frac{\bar v}{\pi}
\]
for consumers, and this is the frontier.

\begin{proposition}\label{prop:continuum}
With values in $[\underline v,\bar v]$,
\[
U=\Big\{(u,\pi):\underline v\le\pi\le\bar v,\ 0\le u\le\pi\ln\frac{\bar v}{\pi}\Big\}.
\]
Every point $(\pi\ln(\bar v/\pi),\pi)$ of the frontier is generated without segmentation by the equal-revenue market $S^\pi$, priced at $\pi$. If $\bar v/e\ge\underline v$, the largest consumer surplus over all markets is $\bar v/e$, attained at producer surplus $\bar v/e$. Otherwise it is $\underline v\ln(\bar v/\underline v)$, attained at $\underline v$.
\end{proposition}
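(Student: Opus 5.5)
The plan mirrors the proof of Theorem~\ref{thm:polygon}: show $U\subseteq P$ using only the upper-bound half of BBM's theorem, and show $P\subseteq U$ using Lemma~\ref{lem:hull}. Here $P$ denotes the region on the right-hand side of the statement. Set $g(\pi)=\pi\ln(\bar v/\pi)$ and $\bar w(\pi)=\pi+g(\pi)$ on $[\underline v,\bar v]$; when $\underline v=0$, take $g(0)=0$ by continuity.

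\emph{Upper bound.} Take $(u,\pi)\in W(F)$. The three bounds give $u\ge0$, $\pi\ge\pi^*(F)$ and $u+\pi\le w^*(F)$. First, $\pi^*(F)\ge\underline v$, because the price $\underline v$ sells to everyone; this is trivial when $\underline v=0$. Second, $\pi\le w^*(F)\le\bar v$. Third, since $tS(t)\le\pi^*(F)$ for every $t$, the tail satisfies $S(t)\le\min(1,\pi^*(F)/t)$. Integrating over $[0,\bar v]$ gives $w^*(F)\le\bar w(\pi^*(F))$. The map $\bar w$ is nondecreasing, since $\min(1,c/t)$ increases pointwise in $c$, so $w^*(F)\le\bar w(\pi)$ and hence $u\le g(\pi)$. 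The degenerate case $\pi^*(F)=0$ can only occur if $F$ is the point mass at $0$; it gives $(0,0)$, which lies in $P$.

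\emph{Attainment.} The function $g$ is concave, because $g''(\pi)=-1/\pi<0$, and nonnegative on $[\underline v,\bar v]$. So $P$ is a convex region: the area between the $\pi$-axis and the graph of a concave nonnegative function. Each point of $P$ is a convex combination of a frontier point $(g(\pi),\pi)$ and the axis point $(0,\pi)$. By Lemma~\ref{lem:hull}, which holds verbatim here as the paper notes, it suffices to show that both kinds of points lie in $\bigcup_F O(F)$.

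For the frontier point, take the market $S^\pi$ and price it at $\pi$. This price is optimal, because every price in $[\pi,\bar v]$ earns $\pi$ and lower prices earn less. It sells to everyone, since no value lies below $\pi$. The outcome is therefore $(w^*(S^\pi)-\pi,\pi)=(g(\pi),\pi)$, so every frontier point is generated without segmentation. For the axis point, take the point mass at $\pi$: its only optimal price is $\pi$, and it yields $(0,\pi)$. Hence $P\subseteq U$.

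\emph{Maximum consumer surplus.} Compute $g'(\pi)=\ln(\bar v/\pi)-1$, which vanishes at $\pi=\bar v/e$. Since $g$ is concave, the maximum of $g$ on $[\underline v,\bar v]$ is at $\bar v/e$ whenever $\bar v/e\ge\underline v$, with value $(\bar v/e)\cdot1=\bar v/e$. Otherwise $g$ is decreasing on the whole interval, and the maximum is at $\underline v$ with value $\underline v\ln(\bar v/\underline v)$.

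The main obstacle is the measure-theoretic care in the upper bound. The inequality $tS(t)\le\pi^*(F)$ needs $\pi^*(F)$ to be a genuine maximum, which upper semicontinuity provides. The endpoint $\underline v=0$ also has to be handled so that $\ln(\bar v/\pi)$ does not blow up; this is done by the convention $g(0)=0$ and by noting that any market with positive profit has $\pi>0$.
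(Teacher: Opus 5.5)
Your proposal is correct and follows the paper's proof step for step. You bound the tail by $\min(1,\pi^*(F)/t)$ and integrate, pass from $\pi^*(F)$ to $\pi$ by monotonicity of $\pi+g(\pi)$, obtain each point of $P$ as a mixture of $S^\pi$ priced at $\pi$ and the point mass at $\pi$ via Lemma~\ref{lem:hull}, and locate the maximum from $g'$ and concavity; the only differences are cosmetic (monotonicity shown pointwise in the integrand rather than via $h'(p)=\ln(\bar v/p)\ge0$, and an explicit treatment of the market with $\pi^*(F)=0$).
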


\begin{proof}
Write $P$ for the set on the right-hand side, and set $g(0)=0$, the limit of $g$ at zero.\par\nopagebreak

\emph{Step 1: a cap on profit caps efficient surplus.} Take any market $F$ and write $\pi^*=\pi^*(F)$. Every price earns at most $\pi^*$, so $tS(t)\le\pi^*$ for all $t>0$. Combined with $S(t)\le1$, this gives $S(t)\le\min(1,\pi^*/t)$, and integrating,
\[
w^*(F)=\int_0^{\bar v}S(t)\,dt\le\int_0^{\bar v}\min\Big(1,\frac{\pi^*}{t}\Big)dt=\pi^*+g(\pi^*).
\]
The last equality is the computation above, which applies because $\pi^*\le\bar v$.

\emph{Step 2: $U\subseteq P$.} Let $(u,\pi)\in W(F)$ for some market $F$. By BBM's bounds, $u\ge0$, $\pi\ge\pi^*$ and $u+\pi\le w^*(F)$. The function $h(p)=p+g(p)$ has derivative $h'(p)=\ln(\bar v/p)\ge0$ on $(0,\bar v]$, so it is nondecreasing there. Producer surplus is at most total surplus, which is at most $\bar v$, so $\pi^*\le\pi\le\bar v$ and therefore $h(\pi^*)\le h(\pi)$. Combining this with Step 1,
\[
u+\pi\le w^*(F)\le h(\pi^*)\le h(\pi)=\pi+g(\pi),
\]
that is, $u\le g(\pi)$. Finally $\pi\ge\pi^*\ge\underline v$, because the price $\underline v$ sells to everyone and earns $\underline v$. Hence $(u,\pi)\in P$.

\Needspace{7\baselineskip}
\emph{Step 3: $P\subseteq U$.} Fix $\pi\in[\underline v,\bar v]$. The market $S^\pi$ is supported on $[\pi,\bar v]$, which lies inside $[\underline v,\bar v]$, so it is an admissible market. As shown above, $\pi$ is an optimal price in it, and at that price everyone buys. Producer surplus is $\pi$ and consumer surplus is $w^*-\pi=g(\pi)$, so $(g(\pi),\pi)\in O(S^\pi)$. In the market where every consumer has value $\pi$, the only optimal price is $\pi$, which gives $(0,\pi)$. Every point $(u,\pi)$ with $0\le u\le g(\pi)$ is a convex combination of these two outcomes, so it lies in $U$ by Lemma~\ref{lem:hull}.

\emph{Step 4: the maximum.} The derivatives are $g'(\pi)=\ln(\bar v/\pi)-1$ and $g''(\pi)=-1/\pi<0$, so $g$ is strictly concave and peaks where $\ln(\bar v/\pi)=1$, at $\pi=\bar v/e$. There $g(\bar v/e)=(\bar v/e)\cdot1=\bar v/e$. If $\bar v/e\ge\underline v$, this peak lies in the admissible range $[\underline v,\bar v]$. Otherwise $g'<0$ on the whole range, and the maximum is at the lowest admissible profit, $g(\underline v)=\underline v\ln(\bar v/\underline v)$. Figure~\ref{fig:continuum} in the example below shows the first case.
\end{proof}

\Needspace{8\baselineskip}
The contrast with the discrete case lies in where the seller prices. With finitely many values and $v_k<\pi<v_{k+1}$, the frontier market $x^\pi$ keeps all its lowest consumers at $v_k$, below $\pi$. Charging $v_k$ earns only $v_k<\pi$, so the seller prices at $v_{k+1}$ and the consumers at $v_k$ do not buy. Segmentation is needed to serve them (Proposition~\ref{prop:between}). With a continuum, the equal-revenue market can start exactly at $\pi$. Its lowest value is then itself an optimal price, everyone buys, and the frontier is reached without any information.

\paragraph{Example: linear demand.} The linear demand $D(p)=1-p$ comes from a unit mass of consumers with values uniform on $[0,1]$, so $\underline v=0$ and $\bar v=1$. By Proposition~\ref{prop:continuum}, the union is
\[
U=\{(u,\pi):0\le\pi\le1,\ 0\le u\le\pi\ln(1/\pi)\}.
\]
\begin{figure}[!htb]
\centering
\begin{tikzpicture}
\begin{axis}[
  width=8.6cm, height=7cm,
  xmin=-0.012, xmax=0.56, ymin=0, ymax=1.04,
  xlabel={Consumer surplus $u$}, ylabel={Producer surplus $\pi$},
  xtick={0,0.1,...,0.5}, ytick={0,0.2,...,1},
  tick label style={font=\small},
  legend columns=2,
  legend style={at={(0.5,-0.2)}, anchor=north, draw=none, font=\footnotesize,
                /tikz/every even column/.append style={column sep=10pt}},
  legend cell align=left,
]
\addplot[fill=gray!22, draw=none, area legend, domain=0.0001:1, samples=200, variable=\t]
  ({\t*ln(1/\t)},{\t}) -- (axis cs:0,0) -- cycle;
\addlegendentry{Union $U$}
\addplot[black, thick, domain=0.0001:1, samples=200, variable=\t]
  ({\t*ln(1/\t)},{\t});
\addlegendentry{Frontier $u=\pi\ln(1/\pi)$}
\addplot[uyellow, dashed, thick] coordinates {(0,0.05) (0,0.19979) (0.14979,0.05) (0,0.05)};
\addlegendentry{$W(x^{\pi})$, $\pi=0.05$}
\addplot[ugreen, dashed, thick] coordinates {(0,0.6) (0,0.90649) (0.30649,0.6) (0,0.6)};
\addlegendentry{$W(x^{\pi})$, $\pi=0.6$}
\addplot[area legend, fill=ublue, fill opacity=0.22, draw=ublue, draw opacity=0.6] coordinates {(0,0.25) (0,0.5) (0.25,0.25)} -- cycle;
\addlegendentry{$W$ of the linear market}
\addplot[only marks, mark=*, mark size=1.4pt, uyellow, forget plot] coordinates {(0.14979,0.05)};
\addplot[only marks, mark=*, mark size=1.4pt, ugreen, forget plot] coordinates {(0.30649,0.6)};
\addplot[only marks, mark=*, mark size=1.6pt, ublue, forget plot] coordinates {(0.125,0.25)};
\draw[ublue, thin] (axis cs:0.125,0.25) -- (axis cs:0.15,0.205);
\node[ublue, font=\footnotesize] at (axis cs:0.17,0.18) {uniform pricing};
\addplot[only marks, mark=*, mark size=1.6pt, black, forget plot] coordinates {(0.36788,0.36788)};
\node[anchor=west, font=\footnotesize] at (axis cs:0.375,0.36788) {$(1/e,\,1/e)$};
\end{axis}
\end{tikzpicture}
\caption{Values uniform on $[0,1]$, the unit-mass population behind $D(p)=1-p$. The union $U$ (grey) with frontier $u=\pi\ln(1/\pi)$. Dashed: BBM's triangles of the equal-revenue markets with $\pi=0.05$ and $\pi=0.6$, whose bottom-right corners (dots) lie on the frontier. Blue: BBM's triangle of the linear market, the limits of price discrimination when the market is known, with uniform pricing marked; its consumer-optimal corner $(\tfrac14,\tfrac14)$ lies inside the union.}
\label{fig:continuum}
\end{figure}

The two shaded objects in Figure~\ref{fig:continuum} answer different questions. The blue triangle is BBM's triangle $W(F)$ for the linear market alone: by their theorem, it is the set of all surplus pairs that segmenting this one market can produce, that is, the limits of price discrimination when the market is known to be linear. (With a continuum of values and finitely many segments, points on its edges may only be approached by finer and finer segmentations.) The grey union collects these triangles over every market with values in $[0,1]$, so it is what remains possible when only the range of values is known. The linear triangle lies strictly inside the union.

Consider the linear market itself, drawn in Figure~\ref{fig:demand}(a). Revenue $p(1-p)$ is largest at $p=\tfrac12$, so the uniform price is $\tfrac12$ and profit is $\pi^*=\tfrac14$. Efficient surplus is $w^*=\int_0^1(1-p)\,dp=\tfrac12$, and consumers who buy at price $\tfrac12$ get $\int_{1/2}^1(1-p)\,dp=\tfrac18$. By BBM's theorem, the best that segmenting this market can do for consumers is $w^*-\pi^*=\tfrac14$.

Now hold profit at $\tfrac14$ and let the market change. The frontier market, drawn in Figure~\ref{fig:demand}(b), has tail $S(t)=\min(1,1/(4t))$ on $[0,1]$. Every consumer has a value of at least $\tfrac14$, and the share $\tfrac14$ that remains at $t=1$ forms an atom at the top value. Its efficient surplus is $\int_0^1\min(1,1/(4t))\,dt=\tfrac14+\tfrac14\ln4$, every price from $\tfrac14$ to $1$ earns exactly $\tfrac14$, and charging $\tfrac14$ sells to everyone. Consumers get $\tfrac14\ln4\approx0.347$, about $39\%$ more than the best segmentation of the linear market. Compared with the linear market, this market has more consumers at every positive price other than $\tfrac12$: below $\tfrac14$ everyone buys, and for $p\ge\tfrac14$ the difference in demand is $1/(4p)-(1-p)=(2p-1)^2/(4p)$. In Figure~\ref{fig:demand}(b) the dashed linear demand lies below the frontier demand at every positive price except $\tfrac12$, where the two curves touch. \citet[Figure~1]{CondorelliSzentes2020} draw the same comparison, starting from the uniform distribution, to show how a buyer who chooses her own distribution can do better than the uniform one. Over all profit levels, $\pi\ln(1/\pi)$ peaks at $\pi=1/e\approx0.37$, where consumer surplus also equals $1/e$. Figure~\ref{fig:continuum} shows the union.

\begin{figure}[!htb]
\centering
\begin{tikzpicture}
\pgfplotsset{dem/.style={
  width=6.4cm, height=5.4cm, xmin=0, xmax=1.08, ymin=0, ymax=1.15,
  axis lines=left, xlabel={Quantity}, ylabel={Price $p$},
  xtick={0,0.25,0.5,0.75,1}, xticklabels={$0$,$\tfrac14$,$\tfrac12$,$\tfrac34$,$1$},
  ytick={0,0.25,0.5,0.75,1}, yticklabels={$0$,$\tfrac14$,$\tfrac12$,$\tfrac34$,$1$},
  tick label style={font=\footnotesize}, label style={font=\small},
  title style={font=\small}, clip=false}}
\begin{axis}[dem, name=da, title={(a) Linear demand $1-p$}]
\addplot[fill=ured, fill opacity=0.22, draw=none] coordinates {(0,0) (0.5,0) (0.5,0.5) (0,0.5)} -- cycle;
\addplot[fill=ugreen, fill opacity=0.3, draw=none] coordinates {(0,0.5) (0,1) (0.5,0.5)} -- cycle;
\addplot[ublue, very thick, domain=0:1] {1-x};
\fill (axis cs:0.5,0.5) circle (1.4pt);
\node[font=\footnotesize, align=center] at (axis cs:0.25,0.25) {seller\\$\tfrac14$};
\node[font=\scriptsize, align=center] at (axis cs:0.165,0.585) {consumers\\$\tfrac18$};
\node[font=\footnotesize, anchor=west] at (axis cs:0.52,0.55) {price $\tfrac12$};
\end{axis}
\begin{axis}[dem, at={($(da.east)+(1.4cm,0)$)}, anchor=west,
  title={(b) Frontier demand $\min(1,1/(4p))$}]
\addplot[fill=ured, fill opacity=0.22, draw=none] coordinates {(0,0) (1,0) (1,0.25) (0,0.25)} -- cycle;
\addplot[fill=ugreen, fill opacity=0.3, draw=none, domain=0.25:1, samples=80, variable=\t]
  ({1/(4*\t)},{\t}) -- (axis cs:0,1) -- (axis cs:0,0.25) -- cycle;
\addplot[ublue, thick, dashed, domain=0:1] {1-x};
\addplot[black, very thick, domain=0.25:1, samples=80, variable=\t] ({1/(4*\t)},{\t});
\draw[black, very thick] (axis cs:0,1) -- (axis cs:0.25,1);
\draw[black, very thick] (axis cs:1,0.25) -- (axis cs:1,0);
\draw[gray, densely dotted] (axis cs:0,0.5) -- (axis cs:0.5,0.5) -- (axis cs:0.5,0);
\draw[gray, densely dotted] (axis cs:0,0.75) -- (axis cs:0.33333,0.75) -- (axis cs:0.33333,0);
\draw[gray, densely dotted] (axis cs:0.25,1) -- (axis cs:0.25,0);
\fill (axis cs:0.5,0.5) circle (1.3pt) (axis cs:0.33333,0.75) circle (1.3pt) (axis cs:0.25,1) circle (1.3pt);
\node[font=\footnotesize, align=center] at (axis cs:0.5,0.125) {seller $\tfrac14\cdot 1=\tfrac14$};
\node[font=\footnotesize, align=center, anchor=west] (cb) at (axis cs:0.6,0.66) {consumers\\$\tfrac14\ln 4\approx 0.35$};
\draw[-latex, thin] (cb.south west) to[bend left=15] (axis cs:0.36,0.42);
\node[font=\footnotesize, anchor=south west] at (axis cs:0.27,1.0) {atom of $\tfrac14$ at value $1$};
\end{axis}
\end{tikzpicture}
\caption{Two markets with values in $[0,1]$ and the same uniform-price profit $\tfrac14$. Red: the seller's revenue; green: consumer surplus. (a) The linear market: the seller charges $\tfrac12$, half the consumers buy, and consumers get $\tfrac18$. (b) The frontier market with demand $\min(1,1/(4p))$: every price from $\tfrac14$ to $1$ earns $p\cdot\frac{1}{4p}=\tfrac14$ (dotted rectangles), so the seller can charge $\tfrac14$ and sell to everyone, and consumers get $\tfrac14\ln4\approx0.35$. Dashed: the linear demand, which meets the frontier demand only at $p=\tfrac12$ and at $p=0$. A similar picture appears as Figure~1 in \citet{CondorelliSzentes2020}.}
\label{fig:demand}
\end{figure}

\Needspace{8\baselineskip}
\section{Concluding Remarks}

When the market is not known, the relevant object is the union of BBM's triangles, and its structure is simple. Its outer edge is the most valuable market consistent with a cap on monopoly profit, net of that profit, and that market leaves the seller indifferent among all high prices. With finitely many values the union is a polygon with one vertex per value, and each vertex is easy to compute from the values.

Information plays a small role. When the profit level equals one of the values, the frontier market needs no segmentation. Between two values it has to be split, but two segments are enough, each the frontier market of an adjacent value. With a continuum of values even this disappears, and no point on the frontier requires segmentation. Consumer surplus is then at most $\bar v/e$, whatever the market and whatever the information. A bound on consumer surplus that holds for every market is therefore mostly about which populations of buyers are possible, and much less about what data the seller holds.

\end{document}